\documentclass[conference]{IEEEtran}
\usepackage[utf8]{inputenc}
\usepackage{amsmath, amssymb, amsthm}
\usepackage{enumerate}
\usepackage{graphicx}
\usepackage{xcolor}
\usepackage{booktabs}

\usepackage{algorithm, algorithmicx, algpseudocode}
\algrenewcommand\algorithmicrequire{\textbf{Input:}}
\algrenewcommand\algorithmicensure{\textbf{Output:}}
\algnewcommand{\BlackBox}[1]{%
    \begin{flushleft}
    \hspace{-.7cm}
    \textbf{Available Functions:}
    {\raggedright #1}
    \end{flushleft}
}
\algnewcommand{\Initialize}[1]{%
    \begin{flushleft}
    \hspace{-.7cm}
    \textbf{Initialize:}
    {\raggedright #1}
    \end{flushleft}
}



\usepackage{hyperref}
\usepackage[capitalize,nameinlink]{cleveref}
\pdfstringdefDisableCommands{\def\eqref#1{(\ref{#1})}}

\makeatletter
\newcommand{\ALG@lineautorefname}{Step}
\makeatother

\usepackage{centernot}

\usepackage{color, soul}

\usepackage{mdframed}
\mdfdefinestyle{MyFrame}{%
    linecolor=black,
    outerlinewidth=2pt,
    roundcorner=20pt,
    innertopmargin=\baselineskip,
    innerbottommargin=\baselineskip,
    innerrightmargin=10pt,
    innerleftmargin=10pt,
    backgroundcolor=gray!20!white
}

\newcommand{\CC}{{\mathcal C}}

\newcommand{\CF}{{\mathcal F}}

\newcommand{\fpkq}{\CF_{p}(k, q)}

\newcommand{\BE}{{\mathbb E}}
\newcommand{\BF}{{\mathbb F}}

\newcommand{\fq}{\BF_{q}}

\newcommand{\fqn}{\BF_{q}^{n}}
\newcommand{\fqx}{\BF_{q}[X]}

\newcommand{\dH}{d_{{\rm H}}}

\newcommand{\bigabs}[1]
{{\raisebox{-0.25\depth}{$\biggl\lvert$}}{#1}\raisebox{-0.25\depth}{$\biggr\rvert$}}



\DeclareMathOperator{\CCP}{CP}
\DeclareMathOperator{\RS}{RS}
\DeclareMathOperator{\GRS}{GRS}

\DeclareMathOperator{\Vol}{Vol}
\DeclareMathOperator*{\expec}{\mathbb{E}}
\DeclareMathOperator{\GS}{GS}
\DeclareMathOperator{\GRScover}{GRS-cover}
\DeclareMathOperator{\GRScoverbl}{GRS-cover-baseline}
\DeclareMathOperator{\GRSdecode}{GRS-decode}

\DeclareMathOperator{\polylog}{polylog}

\newtheorem{theorem}{Theorem}

\newtheorem{corollary}{Corollary}
\newtheorem{conjecture}{Conjecture}

\theoremstyle{definition}

\theoremstyle{remark}

\newcommand{\question}[1]
{
\noindent {\color{red} {\bf Question:} #1}}

\newcommand{\answer}[1]
{\noindent {\color{blue} {\bf Answer:} #1}}

\newcommand{\todo}[1]
{\noindent {\color{red} {\bf To Do:} #1}}

\newcommand{\hide}[1]
{{\iffalse #1 \fi}}
\usepackage{subcaption}
\usepackage{lipsum}

\IEEEoverridecommandlockouts
\title{Efficient Covering Using Reed--Solomon Codes}
\author{\IEEEauthorblockN{Samin Riasat and Hessam Mahdavifar} 
\IEEEauthorblockA{Department of Electrical and Computer Engineering, Northeastern University, Boston, MA 02115, USA \\ 
Email: \{
\href{mailto:riasat.s@northeastern.edu}{riasat.s}, \href{mailto:h.mahdavifar@northeastern.edu}{h.mahdavifar}\}@northeastern.edu}
\thanks{This work was supported by NSF under Grant CCF-2415440 and the Center for Ubiquitous Connectivity (CUbiC) under the JUMP 2.0 program.}

}
\date{}

\begin{document}

\maketitle

\begin{abstract}
    We propose an efficient algorithm to find a Reed--Solomon (RS) codeword at a distance within the covering radius of the code from any point in its ambient Hamming space. To the best of the authors' knowledge, this is the first attempt of its kind to solve the covering problem for RS codes. The proposed algorithm leverages off-the-shelf decoding methods for RS codes, including the Berlekamp–Welch algorithm for unique decoding and the Guruswami-Sudan algorithm for list decoding. We also present theoretical and numerical results on the capabilities of the proposed algorithm and, in particular, the average covering radius resulting from it. Our numerical results suggest that the overlapping Hamming spheres of radius close to the Guruswami--Sudan decoding radius centered at the codewords cover \textit{most} of the ambient Hamming space. 
\end{abstract}

\section{Introduction}
\label{sec:intro}

\emph{Reed--Solomon (RS) codes} are a well-known family of codes that are not only of theoretical interest but also very useful in various applications, ranging from hard-disk drives to distributed storage and computing. 
RS codes belong to a class of codes known as \emph{maximum distance separable (MDS)} codes, implying that they achieve \emph{Singleton bound}~\cite{Huffman03}. 
Although various properties of these codes have been studied extensively in the literature, RS codes continue to inspire the development of new codes to this day. One such example is the class of subspace codes known as \emph{character-polynomial (CP) codes}, recently introduced in \cite{Hessam22, Hessam21}. A one-dimensional CP code is essentially the concatenation of a subcode of \emph{generalized Reed--Solomon (GRS) code}, where a certain subset of the coefficients of the message polynomial is first set to zeros, followed by applying a character function to the codeword coordinates mapping the finite field elements to the complex unit circle. In a recent work~\cite{gooty2025precodingdesignlimitedfeedbackmiso}, CP codes have been used to construct new precoding designs for the problem of multiple-input single-output (MISO) systems with limited feedback, an important problem in wireless communications. In particular, it is established in \cite{gooty2025precodingdesignlimitedfeedbackmiso} that the covering radius of CP codes can be used to characterize the quantization error of the corresponding quantization problem in the Grassmann space. 


In general, covering can be thought of as quantization in a general metric space, such as the Hamming space, which is a useful tool in scenarios that require approximating, representing, or covering continuous spaces with discrete elements. Covering problems are mathematically appealing in their own right, and have found a wide range of technical applications in data compression, signal processing, clustering, sampling, and robust system design, especially in high-dimensional spaces, just to name a few~\cite{torquato10, toth22}
. In a basic covering problem, we have a vector space over a certain alphabet that we wish to cover with as few spheres of a given radius, called the covering radius, as possible. This means that we can approximate any point in the space by mapping it to the center of one of the spheres that cover this point. This provides a guarantee on the accuracy of the approximation in terms of the covering radius. Some of the prior work on the covering problem include characterizing upper and lower bounds on the covering radius and the complexity of computing it \cite{Cohen85}, deriving bounds on the minimal size of a code of a given covering radius~\cite{Cohen86, vanLint88}, studying the complexity of bounding the covering radius of a binary code~\cite{Cohen97}, deriving relations between the covering radius of a code and its subcodes~\cite{Brualdi98}, among others. We refer the reader to the extensive list of references following \cite[Chapter~15]{Huffman03}. 

In this work, our goal is to solve the problem of covering using GRS codes. Although the covering radius of GRS codes is known, the problem of finding an explicit codeword within the covering radius of any given vector in the ambient space of the code has received little attention. 
Moreover, GRS codes are known to achieve the redundancy bound~\cite[Corollary~11.1.3]{Huffman03}. In other words, their worst-case guarantee for the covering problem in the Hamming space is not appealing. However, as discussed earlier, GRS codes can be mapped to the complex domain as in CP codes and used to cover Grassmann space with applications in wireless communications. Furthermore, the \textit{average} covering radius of RS codes could fall far below their worst case of the redundancy bound, which provides further motivation to study efficient covering algorithms for RS codes. 

In this paper, we propose an efficient algorithm to solve the covering problem for GRS codes. That is, given an arbitrary vector in the ambient Hamming space containing the code, we demonstrate how to efficiently find a codeword at a distance within the covering radius of the code from the given vector. 
Our algorithm is based on successive puncturing of the code together with leveraging an off-the-shelf decoder for the resulting GRS codes after each puncture. We then analyze the average number of punctures needed to find a codeword within the covering radius of the code. 
We also estimate the average covering radius resulting from the algorithm by computing the average distance of the output codeword from the input vector. The performance of the algorithm is compared with the worst-case guarantee of the covering radius as well as a straightforward baseline approach. 


The remainder of the paper is organized as follows. 
In \autoref{sec:background} we provide some background on Reed--Solomon codes and the sphere covering problem. In \autoref{sec:covering} we present our covering algorithm and associated theoretical results. In \autoref{sec:simulation} simulation results supporting the theoretical results obtained in \autoref{sec:covering} are presented. Finally, we conclude the paper in \autoref{sec:conclusion}. 

\section{Background}
\label{sec:background}

\subsection{Reed--Solomon Code}

Fix $k \le n \le q$. The \emph{message space}
\begin{align*}
    \CF(k, q) 
    &:= \{f \in \fqx: \deg(f) \le k\} 
\end{align*}
consists of all polynomials of degree at most $k$ over $\fq$. The elements of $\CF(k, q)$ are called \emph{message polynomials}, whose coefficients represent message symbols. 
    
Given distinct $\alpha_{1}, \dots, \alpha_{n} \in \fq$, the encoding of $f \in \CF(k - 1, q)$ 
in the RS code of length $n$ and dimension $k$ over $\fq$ is given by
\begin{align}
    \label{eq:rs}
    \RS(f) 
    &:= (f(\alpha_{1}), \dots, f(\alpha_{n})). 
\end{align}
In addition, given not necessarily distinct 
$v_{1}, \dots, v_{n} \in \fq^{\times}$, the encoding of $f \in \CF(k - 1, q)$ in the GRS code of length $n$ and dimension $k$ over $\fq$ is given by
\begin{align}
    \label{eq:grs}
    \GRS(f) 
    &:= (v_{1} f(\alpha_{1}), \dots, v_{n} f(\alpha_{n})).
\end{align}

As mentioned in \autoref{sec:intro}, $\RS$ and $\GRS$ above are well known to be MDS, i.e., they are $[n, k, d]_{q}$ codes 
with $d := n - k + 1 \ge 1$. 
Note also that $\GRS = \RS$ when $v_{i} = 1$ for all $i \in \{1, \dots, n\}$. 

\subsection{Covering Radius}

The \emph{covering radius} of a block code $\CC \subseteq \fq^{n}$ is defined as (see, e.g., \cite{Huffman03})
\begin{align}
    \label{eq:rho}
    \rho(\CC) 
    &:= \max_{y \in \fqn} 
    \min_{c \in \CC} 
    \dH(y, c), 
\end{align}
where $\dH$ denotes Hamming distance. 
For instance, it is well known that $\rho(\GRS) = d - 1$. 
This means, in particular, that any $y \in \fq^{n}$ is at distance at most $d - 1$ from a $\GRS$ codeword. 

Generally speaking, the covering radius of a code represents the \textit{maximum error} when the code is used for \emph{quantization} of the space\hide{ as described in \autoref{sec:intro}}. This is because the covering radius characterizes the worst-case scenario of the quantization process. In practice, however, given a certain distribution over the space, the average quantization error becomes more relevant. This motivates us to define the \emph{average covering radius} as 
\begin{align}
    \label{eq:rhobar}
    \bar\rho(\CC) 
    &:= \expec_{y \in \fqn} 
    \left[\min_{c \in \CC} 
    \dH(y, c)\right]. 
\end{align}

\subsection{Sphere Packing Versus Sphere Covering}
\label{sec:packing-covering}

In \emph{sphere packing}, the goal is to fit in the space pairwise disjoint spheres centered at the codewords. This is fundamentally related to decoding, where, given a point in the space, one wishes to determine its nearest codeword. In particular, given a packing of the space where the spheres have radius $\tau < d / 2$, \hide{where $d$ is the minimum distance of the code, }one can uniquely decode any message from $\tau$ errors, and the goal is to maximize the decoding radius $\tau$. The dual problem to sphere packing is \emph{sphere covering}, where the goal is to cover the space using minimally overlapping spheres. 
This is fundamentally related to quantization, which is the process of approximating any given point in the space by its closest codeword.
More formally, given a vector $y \in \fq^{n}$ and a code $\CC \subseteq \fq^{n}$, the \emph{covering problem} is to find a codeword $c \in \CC$ such that $\dH(y, c) \le \rho(\CC)$. 
An algorithm that solves the covering problem for $\CC$ is called a \emph{covering algorithm} for $\CC$. 

\section{Covering Algorithms for GRS Codes}
\label{sec:covering}

Decoding algorithms have been the subject of study for many decades, and several decoding algorithms for GRS codes are known~\cite{Huffman03, Guruswami06}. 
In contrast, little to no work has been done on covering algorithms to the best of our knowledge. 
In particular, we are unaware of any covering algorithms for GRS codes. 
Note that covering involves solving a more intricate minimax problem, which is computationally harder due to the need to analyze the worst-case scenario over the entire space. 
And, although one might use decoding algorithms for covering, their performance is bound to be subpar as spheres of radius $\tau < d / 2$ do not cover the space well. 
Therefore, as the first algorithm of its kind, we propose \autoref{alg:GRS-cover}. 
Given an $[n, k, d]$ GRS code $\CC$ and an input vector $y$ in its ambient space, the algorithm first tries to decode $y$. If the decoder is successful, then we have found the closest codeword to $y$. If not, it repeatedly (up to $d - 1$ times) punctures the code at the last coordinate and tries to decode the corresponding substring of $y$ until the decoder is successful,
when it finally returns the re-encoding of the decoder output in the original code. 

\begin{algorithm}[!htbp]
    \caption{$\GRScover(\CC, y)$} 
    \label{alg:GRS-cover}
    \begin{algorithmic}[1]
        \Require{$[n, k, d]_{q}$ $\GRS$ code $\CC$, arbitrary vector $y \in \fq^{n}$}
        \Ensure{A codeword $c \in \CC$ with $\dH(y, c) \le d - 1$}
        \BlackBox{$\GRSdecode$: a $\GRS$ decoder}
        \State{$\CC_{0} \gets \CC$}
        \For{$i = 1, \dots, d-1$}
            \State $f \gets \GRSdecode(\CC, y)$
            \label{line:rsdecode}
            \If {$f$ is not null}
                \State{\Return $\CC_{0}(f)$} 
            \Else 
                \State{puncture $\CC$ at coordinate $n - i + 1$}
                \label{line:punc}
                \State{$y \gets y[1..n - i]$} 
                \label{line:y}
            \EndIf
        \EndFor
    \end{algorithmic}
\end{algorithm}

\autoref{thm:grs-cover} below proves the correctness of \autoref{alg:GRS-cover}. 

\begin{theorem}
    \label{thm:grs-cover}
    Given an $[n, k, d]_{q}$ GRS code $\CC$ and $y \in \fq^{n}$, $\GRScover(\CC, y)$ returns a codeword $c \in \CC$ with $\dH(y, c) \le d - 1$. 
\end{theorem}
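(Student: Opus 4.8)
The plan is to split the claim into two logically independent parts: a \emph{distance bound} that holds whenever the inner call to $\GRSdecode$ succeeds, and a \emph{termination} guarantee that the decoder must succeed before the loop is exhausted. The distance bound is pure bookkeeping, while termination is where the real content lies.

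For the distance bound, suppose the decoder returns a non-null $f$ after the code has been punctured $p$ times, so the current code is the GRS code on $\alpha_{1}, \dots, \alpha_{n-p}$ with multipliers $v_{1}, \dots, v_{n-p}$. Puncturing a single coordinate of an MDS code lowers the length and the minimum distance by one while preserving the dimension, so this is an $[n-p, k, d-p]_{q}$ GRS code. A bounded-distance $\GRSdecode$ (e.g.\ Berlekamp--Welch) returns $f$ only when the truncated input $y[1..n-p]$ lies within $t_{p} := \lfloor (d-p-1)/2 \rfloor$ of the encoding of $f$ in the punctured code, so these two vectors disagree in at most $t_{p}$ coordinates. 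Re-encoding $f$ in the original code $\CC_{0}$ agrees with that punctured encoding on the surviving coordinates $1, \dots, n-p$ and may differ from $y$ arbitrarily on the $p$ discarded ones, whence
\begin{align}
    \dH\bigl(y, \CC_{0}(f)\bigr)
    \le t_{p} + p
    = \left\lfloor \frac{d-p-1}{2} \right\rfloor + p.
\end{align}
I would then verify that the right-hand side is non-decreasing in $p$ (its increment is $0$ or $1$ at each step) and attains its maximum value $d-1$ at $p = d-1$, so that $\dH(y, \CC_{0}(f)) \le d-1$ no matter when the decoder succeeds. Pleasingly, the worst case saturates at exactly the covering radius $\rho(\CC) = d-1$.

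For termination, the key observation is that after $p = d-1 = n-k$ punctures the code has length $k$ and dimension $k$, i.e.\ it is all of $\fq^{k}$: the $k$ evaluation maps $f \mapsto (v_{i} f(\alpha_{i}))_{i=1}^{k}$ form a bijection $\CF(k-1,q) \to \fq^{k}$ by a Vandermonde/interpolation argument (distinct $\alpha_{i}$, nonzero $v_{i}$). Hence the truncated input is automatically a codeword, $\GRSdecode$ returns the unique interpolating $f$ with zero errors, and the algorithm is forced to produce an output; the distance bound above then certifies that this output is within $d-1$ of $y$.

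The step I expect to be the main obstacle is termination, and specifically confirming that the puncturing schedule actually reaches the length-$k$ (dimension-equal-to-length) regime where decoding is guaranteed to succeed — this requires a careful accounting of the loop index against the number of punctures performed, since for generic $y$ the bounded-distance decoder can fail at every intermediate length $n, n-1, \dots, k+1$, and only the full-space code at length $k$ gives an unconditional success. Once that indexing is pinned down, everything else reduces to the routine monotonicity check on $t_{p} + p$, which recovers the known covering radius of the code.
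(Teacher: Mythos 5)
Your proposal is correct and follows the same two-part decomposition as the paper's proof: a distance bound conditional on decoder success, plus a termination guarantee obtained by puncturing down to the trivial $[k,k,1]$ code, where every vector of $\fq^{k}$ interpolates to a codeword. The difference lies in how the distance bound is executed, and your version is the more complete one. The paper works with the monotonicity inequality $\dH(y^{(i+1)}, \CC_{i+1}(f)) \le \dH(y^{(i)}, \CC_{i}(f))$ and concludes $\dH(y^{(i)}, \CC_{i}(f)) \le \dH(y, \CC(f)) \le d-1$; note that this chain bounds the distance in the \emph{punctured} code by the distance in the original code, whereas the theorem asserts a bound on $\dH(y, \CC_{0}(f))$ for the \emph{re-encoded} output, which requires going in the opposite direction and paying for the discarded coordinates. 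Your bound $\dH(y, \CC_{0}(f)) \le \lfloor (d-p-1)/2 \rfloor + p \le d-1$ does exactly that bookkeeping explicitly (decoding radius of the $[n-p,k,d-p]$ punctured code plus at most one disagreement per punctured position) and verifies that the maximum $d-1$ is attained at $p = d-1$; this is the step the paper's write-up needs but leaves implicit. Your flagged concern about the loop indexing is also well-founded: as written, iteration $i$ of \autoref{alg:GRS-cover} decodes the code punctured $i-1$ times, so the last decode call inside the loop is on the length-$(k+1)$ code of minimum distance $2$ (bounded-distance radius $0$), and the puncture down to length $k$ occurs only \emph{after} that final call --- the unconditional-success regime is never actually decoded unless the loop is read as making $d$ attempts rather than $d-1$. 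That off-by-one sits in the algorithm as stated, not in your argument; your proof is sound once the schedule is pinned down so that the length-$k$ code is in fact decoded.
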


\begin{proof}
    Puncturing an $[n, k]$ GRS code at any coordinate gives an $[n - 1, k]$ GRS code. 
    Denote by $y^{(i)}$ and $\CC_{i}$ the 
    values of $y$ and $\CC$ on lines \ref{line:y} and \ref{line:punc}, respectively, at step $i$. 
    Plainly,
    \begin{align}
        \dH(y^{(i+1)}, \CC_{i + 1}(f)) 
        &\le \dH(y^{(i)}, \CC_{i}(f)) 
        \label{eq:reduction}
    \end{align}
    for any $i$. 
    If $\GRSdecode(\CC_{i}, y^{(i)})$ 
    is successful, then 
    \begin{align*}
        \dH(y^{(i)}, \CC_{i}(f)) \le \dH(y, \CC(f)) \le d - 1 
    \end{align*}
    by repeated application of \eqref{eq:reduction}. 
    It remains to show that $\GRSdecode(\CC_{i}, y^{(i)})$ does indeed succeed for some $i < d$. 
    
    Note that 
    \begin{align}
        d_{\min}(\CC_{i + 1}) = d_{\min}(\CC_{i}) - 1
        \label{eq:dmin-reduction}
    \end{align}
    for each $i$. 
    Hence, $\GRSdecode(\CC_{i}, y^{(i)})$ will succeed in at most $d - 1$ steps, since $d_{\min}(\CC_{i}) = 1$ when $i = d - 1$, at which point any $y^{(i)}$ becomes a valid codeword.
\end{proof}

In \autoref{sec:average-covering-radius}, we compare the performance of \autoref{alg:GRS-cover} with the following straightforward algorithm that we pick as the baseline. This algorithm simply punctures the code at a fixed set of (e.g., the last) $d - 1$ coordinates, followed by decoding the punctured vector in the resulting rate-1 code (equivalent to an interpolation) and re-encoding to return a codeword in the original code within the covering radius of $d-1$.\hide{ and show that it significantly improves the average covering radius.} 



\begin{algorithm}[!htbp]
    \caption{$\GRScoverbl(\CC, y)$} 
    \label{alg:trivial-cover}
    \begin{algorithmic}[1]
        \Require{$[n, k, d]_{q}$ $\GRS$ code $\CC$, arbitrary vector $y \in \fq^{n}$}
        \Ensure{A codeword $c \in \CC$ with $\dH(y, c) \le d - 1$}
        \BlackBox{$\GRSdecode$: a $\GRS$ decoder}
        \State{$\CC_{0} \gets \CC$}
        \State{puncture $\CC$ at coordinates $\{n - i + 1: 1 \le i \le d - 1\}$}
        \State{$y \gets y[1..n - d + 1]$} 
        \State $f \gets \GRSdecode(\CC, y)$
        \State{\Return $\CC_{0}(f)$} 
    \end{algorithmic}
\end{algorithm}

\subsection{Improvement Using List Decoding}

Consider \autoref{alg:GRS-cover} where $\GRSdecode$ is a list decoder, which, for us, will mostly be the Guruswami--Sudan list decoder (GS) \cite{Guruswami06}. 
There are several advantages of using a list decoder over a unique decoder. 
First, it was shown by McEliece \cite{McEliece03, McEliece03-1} that GS almost always returns a list of size $1$, whereby implying that it is essentially a unique decoder. 
Next, the decoding radius of GS is $\tau_{\GS} := n - 1 - \lfloor \sqrt{(k - 1) n} \rfloor$. 
This implies, in particular, that the modified version of \autoref{alg:GRS-cover} will return a $\GRS$ codeword $c$ within distance $\tau_{\GS}$ of $y$, i.e., with $\dH(y, c) \le \tau_{\GS}$, if such a codeword exists~\cite{McEliece03-1}. 
This improves the average number of punctures, whence also the average covering radius. 
\hide{
\todo{analyze the number of punctures analytically for unique decoder} 

\todo{prove the same for list decoder or state conjecture} 

\hide{Result about how list decoding gives better results than unique decoding?
(Check if this is accurate, or say it is an approximation.) }

demonstrated below. Let $P(\CC, y)$ denote the number of punctures needed for $\GRScover(\CC, y)$ to succeed.
\todo{don't write theorem, just text} 

\begin{theorem}
    For an $[n, k, d]_{q}$ GRS code $\CC$ and a uniformly random $y \in \fq^{n}$, the expected number of punctures needed for $\GRScover(\CC, y)$ to succeed is 
    \begin{align*}
        \BE[P(\CC, y)] 
        &= 
        \sum_{i = 0}^{d - 1} 
        i \binom{n - i}{\tau_{i}} 
        \left(1 - \frac{1}{q}\right)^{\tau_{i} + i} 
        \left(\frac{1}{q}\right)^{n - i - \tau_{i}} \\ 
        &= \frac{1}{q^{n}}
        \sum_{i = 0}^{d - 1} 
        i \binom{n - i}{\tau_{i}} 
        (q - 1)^{i + \tau_{i}}, 
    \end{align*}
    where $\tau_{i}$ is the decoding radius of $\GRSdecode(\CC_{i}, y^{(i)})$. 
\end{theorem}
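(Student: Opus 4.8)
The plan is to evaluate $\BE[P(\CC,y)] = \sum_{i=0}^{d-1} i\,\Pr[P(\CC,y) = i]$ term by term, after reducing the probabilistic setup to a clean count on error patterns. First I would exploit the linearity of the $\GRS$ code: puncturing, $\GRSdecode$, and the final re-encoding all commute with adding a fixed codeword, so $P(\CC,y)$ depends only on the coset $y + \CC$. Hence I may fix the transmitted codeword to be $0$ and treat $y = e$ as the error vector. Writing $e^{(i)}$ for the restriction of $e$ to the first $n-i$ coordinates, the code $\CC_i$ obtained after $i$ punctures is again $\GRS$ (hence MDS) with decoding radius $\tau_i$, so $\GRSdecode(\CC_i, e^{(i)})$ returns $0$ exactly when $\wt(e^{(i)}) \le \tau_i$. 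Consequently $\{P(\CC,y) = i\}$ is the event that $\wt(e^{(j)}) > \tau_j$ for all $j < i$ while $\wt(e^{(i)}) \le \tau_i$, i.e.\ that level $i$ is the \emph{first} level at which the truncated error weight falls within the decoding radius.

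Next I would count the error vectors realizing $\{P=i\}$ under the model that each of the $i$ coordinates stripped off by the first $i$ punctures is an error position — so each puncture removes exactly one error — while among the surviving $n-i$ coordinates exactly $\tau_i$ are errors and the remaining $n-i-\tau_i$ agree with the transmitted symbol. The $i$ trailing errors contribute $(1-\tfrac1q)^i$, the $\tau_i$ surviving errors contribute $\binom{n-i}{\tau_i}(1-\tfrac1q)^{\tau_i}$, and the agreements contribute $(\tfrac1q)^{n-i-\tau_i}$, giving
\[
\Pr[P = i] = \binom{n-i}{\tau_i}\Bigl(1 - \tfrac{1}{q}\Bigr)^{\tau_i + i}\Bigl(\tfrac{1}{q}\Bigr)^{n-i-\tau_i} = \frac{\binom{n-i}{\tau_i}(q-1)^{\tau_i+i}}{q^{n}}.
\]
Substituting into $\sum_i i\,\Pr[P=i]$ and factoring out $1/q^{n}$ then recovers both displayed forms in the statement.

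The hard part — and the step I would spend the most care on — is justifying that $\{P=i\}$ is captured \emph{exactly} by the single pattern ``all $i$ trailing coordinates are errors and exactly $\tau_i$ of the first $n-i$ coordinates are errors.'' This is delicate because $\wt(e^{(j)})$ decreases by at most one per puncture, so whether level $i$ can be the first successful level is governed by the interplay between $\wt(e^{(i)})$ and the sequence $\tau_0 \ge \tau_1 \ge \cdots$, which for unique decoding is only weakly monotone ($\tau_j = \lfloor (d-j-1)/2\rfloor$). The failure constraints $\wt(e^{(j)}) > \tau_j$ for $j<i$ therefore couple coordinates across levels, and in general the first-success event is a union over several admissible weight profiles rather than the single profile above; indeed, when a decoding radius drops strictly between consecutive levels, some values of $P$ become unattainable, which the single-term formula does not reflect. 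I would thus either (i) adopt the independence-style approximation that equates each level's success probability with the probability of the exact displayed pattern, which yields the stated closed form directly, or (ii) perform the exact computation by inclusion–exclusion over the admissible profiles and isolate the regime (e.g.\ parameters where consecutive radii coincide) in which the sum telescopes to the single term. Making (ii) rigorous, or quantifying the discrepancy between the heuristic (i) and the true expectation, is where I expect the real work to lie; the same framework then transfers verbatim to the list-decoding setting by replacing $\tau_i$ with the Guruswami–Sudan radius $\tau_{\GS}$.
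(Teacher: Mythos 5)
Your derivation follows essentially the same route as the paper's own proof: fix the codeword $\CC(f)$ nearest to $y$, and identify $\{P(\CC,y)=i\}$ with the single pattern ``$y$ disagrees with $\CC(f)$ in all $i$ trailing coordinates and in exactly $\tau_i$ of the first $n-i$ coordinates,'' whose probability under uniform $y$ is $\binom{n-i}{\tau_i}(1-1/q)^{\tau_i+i}(1/q)^{n-i-\tau_i}$. The paper asserts this identification in a single line and immediately computes; what you label heuristic (i) \emph{is} the paper's proof.

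The gap you flag in your final paragraph is therefore not a worry about an alternative route --- it is a genuine defect shared by the paper's argument, and your diagnosis is correct but can be sharpened. Beyond the first-success event being a union of several weight profiles rather than one: (a) success at level $i$ requires $\dH(y^{(i)},c)\le\tau_i$ for \emph{some} codeword $c$ of $\CC_i$, not distance exactly $\tau_i$ to the restriction of $f$, and the nearest codeword need not be preserved under puncturing; (b) failure at level $j<i$ requires \emph{every} codeword of $\CC_j$ to lie farther than $\tau_j$ from $y^{(j)}$, which neither implies nor is implied by all punctured coordinates being error positions; (c) since $f$ is the minimizer of a $y$-dependent quantity, the events $y_j\ne f(\alpha_j)$ are not independent Bernoulli trials, so even the probability of the stated pattern is not legitimately computed as a product; and (d) as a sanity check, the claimed values of $\Pr[P(\CC,y)=i]$ do not sum to $1$ over $i$. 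So neither your proposal nor the paper's proof establishes the stated equality as an identity; both yield at best a heuristic estimate. Your option (ii) --- an exact inclusion--exclusion over admissible weight profiles, together with a careful treatment of how the nearest codeword and the decoding radius evolve under puncturing --- is what a real proof would require, and it has not been carried out; tellingly, the authors relegated this theorem to a commented-out block with a note to replace it by informal text.
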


\begin{proof}
    Letting $f \in \CF(k - 1, q)$ with $\dH(y, \CC(f))$ minimal, 
    \begin{align*}
        \Pr[P(\CC, y) = i] 
        &= \Pr[y_{j} \neq f(\alpha_{j}) \hbox{ for $\tau_{i}$ values of } j \le n - i \\ 
        & \hspace{1.5cm}\hbox{and }  y_{j} \neq f(\alpha_{j})
        \hbox{ for all $j > n - i$}
        ] \\ 
        &= 
        \binom{n - i}{\tau_{i}} 
        \left(1 - \frac{1}{q}\right)^{\tau_{i} + i} 
        \left(\frac{1}{q}\right)^{n - i - \tau_{i}}. 
    \end{align*}
    \hide{
    So the expected number of punctures needed for \autoref{alg:GRS-cover} to succeed is 
    \begin{align*}
        \sum_{i = 0}^{d - 1} 
        i \binom{n - i}{\tau} 
        \left(1 - \frac{1}{q}\right)^{\tau + i} 
        \left(\frac{1}{q}\right)^{n - i - \tau}. 
    \end{align*}
    }(Alternatively, 
    the probability that a uniformly random $y \in \fq^{n}$ satisfies $y_{j} \neq f(\alpha_{j})$ for $\tau_{i}$ values of $j \le n - i$ and $y_{j} \neq f(\alpha_{j})$ for all $j > n - i$ is 
    \begin{align*}
        \frac{1}{q^{n}}
        \binom{n - i}{\tau_{i}} 
        (q - 1)^{\tau_{i} + i}, 
    \end{align*}
    which is equal to $\Pr[P(\CC, y) = i]$ above.) 
    \hide{So the expected number of steps for $\GRScover(\CC, y)$ to succeed is 
    \begin{align*}
        \frac{1}{q^{n}}
        \sum_{i = 0}^{d - 1} i 
        \binom{n - i}{\tau} 
        (q - 1)^{\tau + i}. 
    \end{align*}
    }The conclusion now follows from the definition of the expected value. 
\end{proof}
Assuming that $y$ is chosen uniformly at random, 
\begin{align*}
    &\Pr[P(\CC, y) = i] 
    = \Pr[\GRSdecode(\CC_{i}, y^{(i)})\ {\rm succeeds}] \\ 
    &\qquad\quad \cdot \prod_{j = 0}^{i - 1} \Pr[\GRSdecode(\CC_{j}, y^{(j)})\ {\rm fails}] \\
    &= \frac{|\bigcup_{c \in \CC_{i}} B(c, \tau_{i})|}{q^{n - i}}
    \prod_{j = 0}^{i - 1} 
    \left(1 - \frac{|\bigcup_{c \in \CC_{j}} B(c, \tau_{j})|}{q^{n - j}}\right) \\ 
    &
    \resizebox{\hsize}{!}{
    $\displaystyle \le  
    \frac{\Vol_{q}(\tau_{i}, n - i)}{q^{n - k - i}} 
    \prod_{j = 0}^{i - 1} 
    \left(1 - \frac{ 
    \Vol_{q}(\tau_{j}, n - j) 
    - \frac{1}{2} \sum_{w = d_{j}}^{2 \tau_{j}} A_{w} I(w, \tau_{j})}{q^{n - k - j}}\right).$}
\end{align*}
where $\tau_{i}$ is the decoding radius of $\GRSdecode(\CC_{i}, y^{(i)})$. 
} 


\subsection{Complexity Analysis}

Plainly, the worst-case running time of \autoref{alg:GRS-cover} is $(n - k) \cdot T(n)$, where $T(n)$ is the running time of $\GRSdecode$. 

\begin{itemize}
    \item For unique decoding, algorithms like Berlekamp--Welch (BW) are known with $T(n) = O(n^{3})$~\cite[Theorem~15.1.4]{ecc} as well as more efficient ones with $T(n) = O(n^{2})$ and $T(n) = O(n \polylog(n))$~\cite[Section~15.4]{ecc}. 
    \item For list decoding, there are known implementations of $\GS$ with $T(n) = O(s^{4} n^{2})$~\cite[\S~VII]{McEliece03}, \cite[\S~VII]{McEliece03}, \cite{Shokrollahi00}, \cite[\S~V]{Roth00},\hide{solving the interpolation problem 
    in GS takes $O(n^{3})$ time using the Feng--Tzeng algorithm \cite[\S~VIII]{McEliece03} or naive Gaussian elimination. This was improved by K\"{o}tter to $O(s^{4} n^{2})$ \cite[\S~VII]{McEliece03}, which is the most efficient known solution~
    \cite[\S~4]{McEliece03-1}. 
    
    The most efficient known solution to the factorization problem 
    of GS is due to Gao and Shokrollahi \cite{Shokrollahi00} with a time complexity of $O(\ell^{3} k^{2})$, 
    although the Roth--Ruckenstein algorithm \cite[\S~V]{Roth00} is quite competetive \cite[\S~4]{McEliece03-1} 
    (see also \cite[\S~IX]{McEliece03}). 
    Here $\ell$ is a design parameter, typically a small constant \cite{Roth00}, which is an upper bound on the size of the list of decoded codewords, which is bounded above by the degree of the interpolation polynomial in the second variable.
    ~\cite{Guruswami06}. 
    
    In general, $\ell \le \lfloor c / (k - 1) \rfloor
    = O(s \sqrt{n / k})$ (see also \cite[Theorem~4.8]{Guruswami06}), which gives $O(\ell^{3} k^{2}) = O(s^{3} n \sqrt{k n})$, yielding a $O(s^{4} n^{2})$ time complexity for \autoref{alg:GRS-cover}.}
    where $s > 0$ is an integer parameter known as the \emph{interpolation multiplicity}\hide{, $\GS(y, s)$ 
    returns all $f \in \CF(k - 1, q)$ such that $\dH(y, \GRS(f)) \le n - \tau_{s}$, 
    where $\tau_{s} := \lfloor c / s \rfloor + 1$ and $c := \lfloor \sqrt{(k - 1) n s (s + 1)} \rfloor$}. 
\end{itemize}

However, the average-case running time of \autoref{alg:GRS-cover} depends on the average number of punctures needed for $\GRSdecode$ to succeed, 
and based on the data in \autoref{tab:punctures}, we make the following conjecture. 


\begin{conjecture}
    \label{conj:punctures}
    For an $[n, k, d]_{q}$ GRS code, the average number of punctures needed for \autoref{alg:GRS-cover} to succeed in returning a codeword within the covering radius is less than 
    \begin{itemize}
        \item $c_{1} (d - 1)$ for $\GRSdecode$ a unique decoder, and 
        \item $c_{2}$ for $\GRSdecode$ the $\GS$ list decoder, 
    \end{itemize}
    for some positive constants $c_{1}, c_2 <1$. 
\end{conjecture}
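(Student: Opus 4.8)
The plan is to control the distribution of the random variable $P(\CC, y)$, the number of punctures, through its tail and then to locate the single value of $i$ that dominates the expectation. Writing $p_{j}$ for the probability that $\GRSdecode(\CC_{j}, y^{(j)})$ succeeds on a uniformly random input, and using that $P \le d - 1$ always (as guaranteed by \autoref{thm:grs-cover}), I would begin from the tail-sum identity
\begin{align*}
    \BE[P(\CC, y)]
    &= \sum_{i = 1}^{d - 1} \Pr[P(\CC, y) \ge i].
\end{align*}
Since $y^{(j)}$ is uniform on $\fq^{n - j}$ whenever step $j$ is reached, the per-step success probability is the normalized coverage $p_{j} = \bigl|\bigcup_{c \in \CC_{j}} B(c, \tau_{j})\bigr| / q^{n - j}$, where $\tau_{j}$ is the decoding radius of $\GRSdecode$ on $\CC_{j}$ and $B(c, \tau)$ is the Hamming ball. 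The goal then becomes to show that this coverage undergoes a sharp transition from $o(1)$ to $1 - o(1)$ as $j$ increases, and to pin the transition point well below $d - 1$ for the unique decoder, or at $O(1)$ for $\GS$.

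Second, I would evaluate $p_{j}$. For a unique decoder one has $\tau_{j} = \lfloor (d - j - 1)/2 \rfloor < d_{\min}(\CC_{j})/2$, so the balls are pairwise disjoint and $p_{j} = q^{k} \Vol_{q}(\tau_{j}, n - j) / q^{n - j}$ \emph{exactly}. For $\GS$ the radius $\tau_{\GS}$ exceeds $d_{\min}(\CC_{j})/2$, so the balls overlap and only a union bound $p_{j} \le q^{k} \Vol_{q}(\tau_{j}, n - j)/q^{n - j}$ is immediate; here I would instead use inclusion--exclusion driven by the explicitly known MDS weight enumerator of $\CC_{j}$ together with pairwise ball-intersection volumes to obtain a matching lower bound on $p_{j}$. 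Substituting the $q$-ary entropy estimate $\Vol_{q}(\theta m, m) = q^{m(H_{q}(\theta) + o(1))}$ then locates the threshold. For the unique decoder, writing $m = n - j$ and $t = m - k = d_{\min}(\CC_{j}) - 1$, the coverage exponent vanishes when the relative redundancy $\theta = t/m$ solves $\theta = H_{q}(\theta/2)$; the resulting threshold $t^{\ast}$ is a fixed positive fraction of $d - 1 = n - k$, so the expectation should concentrate at $i^{\ast} = (d - 1) - t^{\ast} = c_{1}(d - 1)$ with $c_{1} < 1$ depending only on the rate and on $q$. For $\GS$, the much larger radius $\tau_{\GS} = (n - j) - 1 - \lfloor \sqrt{(k - 1)(n - j)} \rfloor$ forces the exponent positive already at $j = O(1)$, so the same concentration yields a constant $c_{2}$.

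The main obstacle, which is exactly why the statement is posed as a conjecture rather than a theorem, is twofold. First, the per-step events are \emph{not} independent: $y^{(0)}, y^{(1)}, \dots$ are nested prefixes of a single $y$, so $\Pr[P \ge i]$ is a genuine joint probability over correlated events rather than the product $\prod_{j < i}(1 - p_{j})$. I would attempt to sidestep this by coupling to a fixed nearest codeword $c^{\ast}$ and analyzing the monotone error-count process $\tilde E_{j} = \dH(y^{(j)}, c^{\ast}[1..n-j])$, which decreases by the indicator that the stripped suffix coordinate is an error; bounding the first $j$ with $\tilde E_{j} \le \tau_{j}$ gives a clean upper bound on $P$, at the cost of replacing the true nearest codeword by a fixed one. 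Second, turning the heuristic phase transition into rigorous bounds requires sharp control of $\Vol_{q}$ near the critical radius, and for $\GS$ a genuine \emph{lower} bound on the overlapping coverage; the delicate regime is high rate, where $\tau_{\GS}$ is in fact smaller than the covering radius $\rho(\CC) = d - 1$, and it is there that establishing a constant $c_{2}$ independent of $d$ is hardest.
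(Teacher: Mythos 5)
First, a point of calibration: the paper does not prove this statement. It is stated as \autoref{conj:punctures} precisely because the authors could not establish it analytically; the only support offered is the empirical data in \autoref{tab:punctures}, together with the coverage-fraction machinery of \autoref{thm:union-lower-bound} and \autoref{cor:union-lower-bound}, which is evaluated numerically rather than turned into a bound on the expected number of punctures. So there is no ``paper proof'' to match your argument against. Your plan is a sensible roadmap, and its second ingredient --- inclusion--exclusion over pairs of codewords using the known MDS weight enumerator $A_{w}$ and the pairwise ball-intersection count $I(w,\tau)$ --- is exactly the machinery the paper develops in \autoref{thm:union-lower-bound}. But your proposal is a plan with acknowledged holes, not a proof, and the holes are the hard part.

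Two gaps in particular would need to be closed. (i) Your reduction of $\Pr[P \ge i]$ to per-step success probabilities $p_{j}$ assumes $y^{(j)}$ is uniform on $\fq^{n-j}$ ``whenever step $j$ is reached,'' but step $j$ is reached only conditional on all earlier decoders failing, and that conditioning biases the prefix distribution; you flag this, but your proposed fix --- coupling to the nearest codeword $c^{\ast}$ and tracking $\tilde E_{j} = \dH(y^{(j)}, c^{\ast}[1..n-j])$ --- does not resolve it, because $c^{\ast}$ is itself a function of $y$, so conditioned on $c^{\ast}$ being nearest the error pattern $y - c^{\ast}$ is not a product measure, and for any \emph{fixed} codeword the distance to a uniform $y$ concentrates near $n(1-1/q)$, far outside every decoding radius. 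The gain from taking the minimum over $q^{k}$ codewords is the whole content of the problem, and it reintroduces the union/second-moment analysis you were trying to avoid. (ii) For the GS case you need a \emph{lower} bound on the overlapping coverage $\bigl|\bigcup_{c} B(c,\tau_{\GS})\bigr|/q^{n-j}$ that is $1 - o(1)$ after $O(1)$ punctures, uniformly in the rate; the Bonferroni bound of \autoref{thm:union-lower-bound} can go negative when the pairwise-intersection term dominates, and the paper itself only observes numerically that the optimizing radius $\tau_{\max}$ sits near $\tau_{\GS}$. Until these two steps are made rigorous --- and the constants $c_{1}, c_{2}$ are shown not to degenerate as $k/n \to 1$, where $\tau_{\GS}$ falls below $d-1$ --- the statement remains a conjecture, which is how the paper presents it.
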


Note that \autoref{conj:punctures} would imply that the average covering radius using \autoref{alg:GRS-cover} is a constant fraction (less than one) of the covering radius, which can happen in the worst case, when a unique decoder is deployed. Furthermore, when the GS list decoder is deployed, then the average covering radius is within a constant from $\tau_{\GS}$. 



\hide{
\begin{algorithm}[!htbp]
    \caption{$\GRScover(\CC, y)$} 
    \label{alg:GRS-cover-2}
    \begin{algorithmic}[1]
        \Require{$[n, k, d]_{q}$ $\GRS$ code $\CC$, arbitrary vector $y \in \fq^{n}$}
        \Ensure{A codeword $c \in \CC$ with $\dH(y, c) \le d - 1$}
        \BlackBox{$\GRSdecode$: a $\GRS$ list decoder}
        \State{$\CC_{0} \gets \CC$}
        \For{$i = 1, \dots, d-1$}
            \State $L \gets \GRSdecode(\CC, y)$
            \For {$f \in L$}
                \If {$\dH(y, \CC_{0}(f)) \le d - 1$} \label{alg:2:5}
                    \State{\Return $\CC_{0}(f)$} 
                \EndIf
            \EndFor
            \State{$\CC \gets \hbox{puncturing of $\CC$ at coordinate $n - i + 1$}$}
            \State{$y \gets y[1..n - i]$} 
        \EndFor
    \end{algorithmic}
\end{algorithm}

\question{Improving GS interpolation step for $\CCP$?}

\answer{Let $f \in \fpkq$. Then 
\begin{align*}
    0 
    &\equiv Q(X, f(X)) \\  
    &= \sum_{i, j} q_{i, j} X^{i} \left(\sum_{\substack{1 \le \ell \le k \\ p \nmid \ell}} f_{\ell} X^{\ell}\right)^{j} \\ 
    &= \sum_{i, j} q_{i, j} 
    \sum_{\substack{1 \le \ell_{1}, \dots, \ell_{j} \le k \\ 
    p \nmid \ell_{1} \cdots \ell_{j}}} 
    f_{\ell_{1}} \cdots f_{\ell_{j}} X^{i + \ell_{1} + \cdots + \ell_{j}} \\ 
    &= \sum_{t = 0}^{\ell} 
    Q^{(t)}(X) f(X)^{t}. 
\end{align*}
Then $Q^{(0)}(X) = 0$. Since each coefficient must vanish individually, we have, for each $d \ge 0$, 
\begin{align*}
    \sum_{i, j} q_{i, j} 
    \sum_{\substack{1 \le \ell_{1}, \dots, \ell_{j} \le k \\ 
    p \nmid \ell_{1} \cdots \ell_{j} \\ 
    i + \ell_{1} + \cdots + \ell_{j} = d}} 
    f_{\ell_{1}} \cdots f_{\ell_{j}} 
    &= 0.
\end{align*}
}
}

\subsection{Coverage Fraction} 

\hide{
Assuming that points in $\fqn$ are independently covered by each $B(c, \tau)$, the probability that a randomly chosen point is not covered by any $B(c, \tau)$ is 
\begin{align*}
    \prod_{c \in \CC} 
    \left(1 - \frac{|B(c, \tau)|}{|\fqn|}\right) 
    &= \left(1 - \frac{\sum_{j = 0}^{\tau} \binom{n}{j} 
    (q - 1)^{j}}{q^{n}}\right)^{q^{k}}. 
\end{align*}
Hence, the number of points in $\fqn$ covered by the $B(c, \tau)$ is \hide{
\begin{align*}
    \bigabs{\bigcup_{c \in \CC} B(c, \tau)}. 
\end{align*}
Plainly\footnote{See \href{https://feog.github.io/11-coding.pdf}{here}.}
\begin{align*}
    \bigabs{\bigcup_{c \in \CC} B(c, \tau)} 
    &\le |B(c, \tau)| q^{k} 
    = q^{k} \sum_{j = 0}^{\tau} \binom{n}{j} (q - 1)^{j}
\end{align*}
and\footnote{See Section 3.2 \href{https://www.researchgate.net/publication/360859226_On_the_number_of_error_correcting_codes/fulltext/628f10fb55273755ebb5b1a0/On-the-number-of-error-correcting-codes.pdf?origin=scientificContributions}{here}.} 
\begin{align*}
    \bigabs{\bigcup_{c \in \CC} B(c, \tau)} 
    &\ge |B(c, \tau)| q^{k} - \binom{q^{k}}{2} 
    W_{q}(n, k, \tau). 
\end{align*}
In particular,} 
\begin{align*}
    \bigabs{\bigcup_{c \in \CC} B(c, \tau)} 
    &\approx q^{n} 
    \left(1 - \left(1 - \frac{\sum_{j = 0}^{\tau} \binom{n}{j} 
    (q - 1)^{j}}{q^{n}}\right)^{q^{k}}\right) \\ 
    &\approx q^{k} 
    \sum_{j = 0}^{\tau} \binom{n}{j} 
    (q - 1)^{j} \\ 
    &\approx q^{k + n h_{q}(\tau / n)} 
\end{align*}
for $\tau / n \le 1 - 1 / q$. 
When $\tau = \tau_{\GS}$, this is asymptotically equal to $q^{n (R + h_{q}(1 - \sqrt{R}))} \approx q^{n}$, where $R = k / n$. 
Hence, \hide{for $R$ sufficiently large, }$\GRScover$ will almost always succeed without any punctures.

\question{Average covering radius?}

\answer{Given a block code $\CC \subseteq \fqn$, it follows from \eqref{eq:rho} that every $y \in \fqn$ satisfies $d(y, c) \le \rho(\CC)$ for some $c \in \CC$. Moreover, $d_{\min}(\CC_{i}) = d - i$ by \eqref{eq:dmin-reduction}. Hence, by \autoref{alg:2:5} of \autoref{alg:GRS-cover-2}, ... 

In general, we need to solve 
\begin{align*}
    k + n h_{q}(\tau / n) 
    &\ge n 
\end{align*}
for $\tau$. 
}

\question{Improvement using weight distribution of RS?}

}
Next, we show a lower bound on the size/fraction of the space covered by Hamming spheres of a given radius centered at the codewords of an MDS code in terms of its weight distribution. The key observation is that one can explicitly compute the sizes of the Hamming spheres and their pairwise intersections, as \autoref{thm:union-lower-bound} below demonstrates. 


\begin{theorem}
    \label{thm:union-lower-bound}
    Let $\CC$ be an $[n, k, d]_{q}$ MDS code. Then
    \begin{align*}
        \bigabs{\bigcup_{c \in \CC} B(c, \tau)} 
        \ge q^{k} \left(\sum_{i = 0}^{\tau} \binom{n}{i} (q - 1)^{i} 
        - \frac{1}{2} \sum_{w = d}^{2 \tau} A_{w} I(w, \tau)\right), 
    \end{align*}
    where 
    \begin{align}
        &A_{w} 
        = \binom{n}{w} 
        \sum_{j = 0}^{w - d} (-1)^{j} 
        \binom{w}{j} (q^{w - d + 1 - j} - 1)  \label{eq:weight-distribution} 
    \end{align} 
    and 
    \begin{align}
        & I(w, \tau) = 
        \label{eq:intersection} \\ 
        &\hspace{-1mm} 
        \resizebox{\hsize}{!}{
        $\displaystyle \sum_{z = 0}^{n - w} 
        \binom{n - w}{z} 
        (q - 1)^{n - w - z}
        \hspace{-3mm}
        \sum_{
            \substack{
                n - \tau - z \le u, v \le \tau  
                \\ 
                u + v \le w 
            }
        } \binom{w}{u} 
        \binom{w - u}{v}
        (q - 2)^{w - u - v}$.} 
        \notag 
    \end{align} 
\end{theorem}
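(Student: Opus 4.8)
The plan is to derive the bound from the first-order Bonferroni (inclusion--exclusion) inequality $\bigabs{\bigcup_{c \in \CC} B(c, \tau)} \ge \sum_{c \in \CC} |B(c, \tau)| - \sum_{\{c, c'\}} |B(c, \tau) \cap B(c', \tau)|$, where the second sum runs over unordered pairs of distinct codewords, and then to evaluate each sum. Throughout I use that $\CC$ is linear (as GRS codes are) and that the Hamming metric is translation invariant. For the first sum, every sphere has the same cardinality $\sum_{i = 0}^{\tau} \binom{n}{i}(q-1)^i$, and since $|\CC| = q^k$ this sum equals $q^k \sum_{i = 0}^{\tau} \binom{n}{i}(q-1)^i$, the leading term.

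For the second sum, translation invariance shows that $|B(c, \tau) \cap B(c', \tau)|$ depends only on $w := \dH(c, c') = \wt(c - c')$; write $I(w, \tau)$ for this common value, which vanishes when $w > 2\tau$ by the triangle inequality. By linearity the differences $c - c'$ run over all of $\CC$ as $c'$ varies, so the number of unordered pairs at distance $w$ is $\frac{1}{2} q^k A_w$, where $A_w$ is the number of weight-$w$ codewords; since $w \ge d$ for distinct codewords, the second sum is $\frac{1}{2} q^k \sum_{w = d}^{2\tau} A_w I(w, \tau)$. Substituting both sums and factoring out $q^k$ gives the stated inequality. The expression \eqref{eq:weight-distribution} for $A_w$ is the classical weight enumerator of a linear MDS code, which I would cite (e.g.\ \cite{Huffman03}) or rederive by inclusion--exclusion over the supports of codewords.

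It remains to establish \eqref{eq:intersection}. Fixing the centers at $0$ and a representative weight-$w$ vector $c'$, I would count the $x$ with $\wt(x) \le \tau$ and $\wt(x - c') \le \tau$ by classifying coordinates relative to $S := \supp(c')$. Outside $S$ one has $(x - c')_i = x_i$; letting $z$ denote how many of these $n - w$ coordinates vanish yields the factor $\binom{n-w}{z}(q-1)^{n-w-z}$. On $S$ each coordinate is either $0$ (say $u$ of them), equal to $c'_i$ (say $v$ of them), or one of the remaining $q-2$ values ($w - u - v$ of them), contributing $\binom{w}{u}\binom{w-u}{v}(q-2)^{w-u-v}$. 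A direct computation gives $\wt(x) = n - z - u$ and $\wt(x - c') = n - z - v$, so the two radius constraints read exactly $u \ge n - \tau - z$ and $v \ge n - \tau - z$, which together with $u + v \le w$ delimit the summation; the upper bounds $u, v \le \tau$ in \eqref{eq:intersection} are redundant consequences of these (using $z \le n - w$) and are kept only for symmetry.

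The main obstacle is precisely this intersection count: the three-way classification of the coordinates in $S$ and the translation of both Hamming-ball constraints into clean bounds on $u$, $v$, and $z$ is where sign errors and off-by-one mistakes are most likely, whereas the Bonferroni step and the MDS weight enumerator are standard.
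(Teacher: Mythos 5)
Your proposal is correct and follows essentially the same route as the paper: the first-order Bonferroni bound, the classical MDS weight enumerator for $A_{w}$, and a coordinate-classification count for $I(w,\tau)$ with $u$, $v$, $z$ playing exactly the roles the paper assigns them. You in fact supply more detail than the paper's proof does, notably the explicit identities $\wt(x) = n - z - u$ and $\wt(x - c') = n - z - v$ and the verification that the upper bounds $u, v \le \tau$ are implied by $u + v \le w$, $z \le n - w$, and the lower bounds.
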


\begin{proof}
    By the inclusion-exclusion principle, 
    \begin{align*}
        \bigabs{\bigcup_{c \in \CC} B(c, \tau)} 
        &\ge \sum_{c \in \CC} |B(c, \tau)| - \frac{1}{2} \sum_{c_{1} \neq c_{2}} |B(c_{1}, \tau) \cap B(c_{2}, \tau)| \\ 
        &= |\CC| \left(\sum_{i = 0}^{\tau} \binom{n}{i} (q - 1)^{i} 
        - \frac{1}{2} \sum_{w = d}^{2 \tau} A_{w} I(w, \tau)\right), 
    \end{align*}
    where $A_{w}$ is the number of codewords $c \in \CC$ of weight $w$, and $I(w, \tau) := |B(c_{1}, \tau) \cap B(c_{2}, \tau)|$ for $c_{1}, c_{2} \in \CC$ with $\dH(c_{1}, c_{2}) = w$. 
    Now, \eqref{eq:weight-distribution} is known to hold for $d \le w \le n$ (e.g., see~\cite{Macwilliams77}), and \eqref{eq:intersection} follows by a counting argument, where $u$ (resp. $v$) represents the number of indices in $\{1, \dots, n\}$ where $y$ and $c_{1}$ (resp. $c_{2}$) agree, and $z$ represents the number of indices where $c_{1}$, $c_{2}$ and $y$ agree. 
\end{proof}



\autoref{thm:union-lower-bound} gives the following bounds on the fraction of space covered by the union of the Hamming spheres. 


\begin{corollary}
    \label{cor:union-lower-bound} 
    Let $\CC$ be an $[n, k, d]_{q}$ GRS code. Then
    the fraction of the ambient space covered by the union of Hamming spheres of radius $\tau$ centered at the codewords satisfies 
    \begin{align*}
        & \resizebox{\hsize}{!}{$\displaystyle q^{k - n} 
        \Vol_{q}(\tau, n) 
        - \frac{q^{k - d}}{2}  
        \sum_{w = d}^{2 \tau} 
        \binom{n}{w} 
        [\Vol_{q}(\tau, w) - \Vol_{q}(w - \tau - 1, w)]$} 
        \\ 
        &\le 
        q^{-n} \bigabs{\bigcup_{c \in \CC} B(c, \tau)} 
        \le q^{k - n} 
        \Vol_{q}(\tau, n),  
    \end{align*}
    where (see, e.g., \cite{ecc})
    \begin{align*}
        \Vol_{q}(\tau, n) 
        &:= |B(0, \tau)| 
        = \sum_{j = 0}^{\tau} 
        \binom{n}{r} (q - 1)^{j} 
        \approx q^{n H_{q}(\tau / n)}
    \end{align*}
    for $H_{q}(x) := x \log_{q}(q - 1) - x \log_{q}(x) - (1 - x) \log_{q}(1 - x)$. 
\end{corollary}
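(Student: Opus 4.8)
The plan is to derive both inequalities directly from \autoref{thm:union-lower-bound} together with elementary manipulations, normalizing everything by $q^{n}$ so as to pass from the cardinality of the union to the fraction of the ambient space it occupies. The upper bound is immediate: since $|\CC| = q^{k}$ and each sphere has $|B(c,\tau)| = \Vol_{q}(\tau, n)$, the union bound gives $\bigabs{\bigcup_{c\in\CC} B(c,\tau)} \le q^{k}\Vol_{q}(\tau,n)$, and dividing by $q^{n}$ yields $q^{k-n}\Vol_{q}(\tau,n)$. This step requires no new ideas and can also be read off from \autoref{thm:union-lower-bound} by discarding the (nonnegative) subtracted term.

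The lower bound is where the work lies. I would start from the bound in \autoref{thm:union-lower-bound}, recognize that $\sum_{i=0}^{\tau}\binom{n}{i}(q-1)^{i} = \Vol_{q}(\tau,n)$, divide through by $q^{n}$, and then bound the subtracted double sum $\frac{q^{k-n}}{2}\sum_{w=d}^{2\tau} A_{w} I(w,\tau)$ from above by the clean closed form in the statement. This reduces to a per-term estimate of the product $A_{w} I(w,\tau)$ against $\binom{n}{w}[\Vol_{q}(\tau,w) - \Vol_{q}(w-\tau-1,w)]$. For the weight enumerator I would use the standard MDS bound $A_{w} \le \binom{n}{w} q^{w-d+1}$ extracted from \eqref{eq:weight-distribution} by dominating the alternating sum by its leading term.

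For the intersection $I(w,\tau)$ I would use a geometric projection argument. Writing $y \in B(c_{1},\tau)\cap B(c_{2},\tau)$ in terms of its restriction $y'$ to the $w$ coordinates where $c_{1}$ and $c_{2}$ differ, the triangle inequality on that block forces $\dH(y', c_{1}|_{w}) + \dH(y', c_{2}|_{w}) \ge w$, so in particular $\dH(y', c_{1}|_{w}) \in [w-\tau, \tau]$; the number of such $y'$ is exactly $\sum_{j=w-\tau}^{\tau}\binom{w}{j}(q-1)^{j} = \Vol_{q}(\tau,w) - \Vol_{q}(w-\tau-1,w)$, while extending $y$ freely over the remaining $n-w$ agreeing coordinates contributes at most a factor $q^{n-w}$. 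Collapsing the powers of $q$ then telescopes the double sum into the stated bound.

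The main obstacle I anticipate is the power-of-$q$ bookkeeping. The naive product of the two individual over-counts, $A_{w} \le \binom{n}{w}q^{w-d+1}$ and $I(w,\tau) \le q^{n-w}[\Vol_{q}(\tau,w)-\Vol_{q}(w-\tau-1,w)]$, leaves a gap of a single factor of $q$ relative to the $q^{k-d}$ prefactor in the statement. The delicate point is therefore to recover this factor, either by retaining the coupling between the shared error budget on the $n-w$ agreeing coordinates and the admissible distance on the $w$ differing block (which tightens the crude $q^{n-w}$ factor), or by invoking the sharper enumerator estimate $A_{w} \le \binom{n}{w}(q-1)q^{w-d}$. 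Once the per-term inequality $q^{-n} A_{w} I(w,\tau) \le q^{-d}\binom{n}{w}[\Vol_{q}(\tau,w)-\Vol_{q}(w-\tau-1,w)]$ is secured, summing over $w$ from $d$ to $2\tau$ and subtracting from $q^{k-n}\Vol_{q}(\tau,n)$ delivers the lower bound, completing the proof.
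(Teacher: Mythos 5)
Your overall strategy coincides with the paper's: normalize \autoref{thm:union-lower-bound} by $q^{n}$, get the upper estimate from the union bound, and control the subtracted term by per-$w$ bounds on $A_{w}$ and $I(w,\tau)$. Your derivation of $I(w,\tau)\le q^{n-w}\left[\Vol_{q}(\tau,w)-\Vol_{q}(w-\tau-1,w)\right]$ by restricting $y$ to the $w$ coordinates where $c_{1}$ and $c_{2}$ differ and applying the triangle inequality there is a clean direct-counting alternative to the paper's algebraic manipulation of \eqref{eq:intersection}, and it lands on exactly the same estimate the paper uses.

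The factor-of-$q$ gap you flag, however, is genuine, and neither of your proposed escapes closes it as described. To reach the stated prefactor $q^{k-d}/2$ you need $A_{w}\,I(w,\tau)\le q^{n-d}\binom{n}{w}\left[\Vol_{q}(\tau,w)-\Vol_{q}(w-\tau-1,w)\right]$, which, given the $I(w,\tau)$ bound above, forces $A_{w}\le\binom{n}{w}q^{w-d}$. The sharp MDS estimate is $A_{w}\le\binom{n}{w}(q-1)q^{w-d}$ --- precisely your suggested refinement --- and it is attained at $w=d$, where $A_{d}=\binom{n}{d}(q-1)$; hence $A_{w}\le\binom{n}{w}q^{w-d}$ is false for $q>2$, and your refinement only delivers the prefactor $(q-1)q^{k-d}/2$, not $q^{k-d}/2$. (The paper's own proof simply asserts $A_{w}\le\binom{n}{w}q^{w-d}$, so it hides the same defect rather than resolving it.) Your second idea --- retaining the coupling between the error budget spent on the $n-w$ agreeing coordinates and the admissible annulus on the $w$ differing ones --- is the more promising route: the intermediate expression $\sum_{z}\binom{n-w}{z}(q-1)^{n-w-z}\left[\Vol_{q}(\tau,w)-\Vol_{q}(n-\tau-z-1,w)\right]$ is strictly below $q^{n-w}\left[\Vol_{q}(\tau,w)-\Vol_{q}(w-\tau-1,w)\right]$ except at $z=n-w$, so there is slack to recover a factor of $q$. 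But you would have to carry that computation out explicitly, and as written your proposal does not. Either complete that estimate or weaken the corollary's prefactor to $(q-1)q^{k-d}/2$.
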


\begin{proof}
    Observe that 
    \begin{align*}
        A_{w} 
        &\le \binom{n}{w} q^{w - d} 
    \end{align*}
    by \eqref{eq:weight-distribution}, 
    and 
    \begin{align*}
        & I(w, \tau) 
        \ \\ 
        &\resizebox{\hsize}{!}{$\displaystyle = \sum_{z = 0}^{n - w} 
        \binom{n - w}{z} 
        (q - 1)^{n - w - z}
        \sum_{
                n - \tau - z \le u \le \tau 
        } \binom{w}{u} 
        \sum_{
                n - \tau - z \le v \le \tau 
        } 
        \binom{w - u}{v}
        (q - 2)^{w - u - v}$} \\ 
        & 
        \le \sum_{z = 0}^{n - w} 
        \binom{n - w}{z} 
        (q - 1)^{n - w - z}
        \sum_{
                n - \tau - z \le u \le \tau 
        } \binom{w}{u} 
        (q - 1)^{w - u}
        \\ 
        &\resizebox{\hsize}{!}{$\displaystyle = \sum_{z = 0}^{n - w} 
        \binom{n - w}{z} 
        (q - 1)^{n - w - z} 
        [\Vol_{q}(\tau, w) - \Vol_{q}(n - \tau - z - 1, w)]$} \\ 
        &\le 
        q^{n - w} 
        [\Vol_{q}(\tau, w) - \Vol_{q}(w - \tau - 1, w)] 
    \end{align*}
    by \eqref{eq:intersection}. The conclusion now follows from \autoref{thm:union-lower-bound}. 
\end{proof}

Note that the bound in \autoref{thm:union-lower-bound} is sharp when overlaps of three or more Hamming spheres are negligible, which is the case when $\tau$ is sufficiently small, and, in particular, expected to be the case for the smallest $\tau$ such that the union of Hamming spheres of radius $\tau$ covers most of the space. We estimated this bound for $\tau \in (d / 2, d)$ for several $[n, k, d]_{q}$ GRS codes and observed that the radius $\tau_{\max}$ yielding the best lower bound is generally close 
to $\tau_{\GS}$. 
In the next section, we will show numerically that 
\begin{itemize}
    \item $\tau_{\GS} \approx \tau_{\max}$ for all $1 \le k < n$ with $(q, n) = (47, 46)$, 
    implying that Hamming spheres of radius $\tau_{\GS}$ cover a significant fraction of the ambient space,
    \item \autoref{conj:punctures} is satisfied for all $1 \le k < n$ with $(q, n) = (7, 6)$, $(11, 10)$, 
    and
    \item the average covering radius obtained via $\GRScover$ using the $\GS$ list decoder is very close to that obtained using a \emph{maximum-a-posteriori} (MAP) decoder for $(q, n) = (7, 6)$ and $1 \le k < n$.
\end{itemize}
\hide{
\todo{try to show that for $d / 2$ it's close to $0$ and for $d$ it is $1$; ignoring overlaps, when does it reach $q^{n - k}$?}

Assuming that the overlaps between the Hamming spheres are small, this is roughly equal to $q^{k - n + n H_{q}(\tau / n)}$. This is equal to $1$ when $H_{q}(\tau / n) = 1 - k / n$, i.e. 
\begin{align*}
    \tau 
    &\approx (q - 1) (n - (\log q)^{2} (d + 1)) 
\end{align*}
and for $\tau = d / 2$, 
\begin{align*}
    k - n + n H_{q}(d / (2 n)) 
    \approx k - n + d / 2 
    = 1 - d / 2, 
\end{align*}
which tends to $-\infty$ for sufficiently large $n$ and $k \ll n$. 
Here, we have used the approximation $x \approx (q - 1) q^{- \log q (1 - k / n)}$ when $1 - k / n$ is small.} 


\hide{In particular,\footnote{See also \href{https://dspace.mit.edu/bitstream/handle/1721.1/36834/6-451Spring-2003/NR/rdonlyres/Electrical-Engineering-and-Computer-Science/6-451Spring-2003/84820E55-5147-40DC-83C4-1A090FC63D56/0/chapter8.pdf}{here}.} 
\begin{align*}
    A_{d} 
    &= \binom{n}{d} (q - 1), \\ 
    A_{d + 1} 
    &= \binom{n}{d + 1} 
    \left(q^{2} - 1 - \binom{d + 1}{d} (q - 1)\right) \\ 
    &= \binom{n}{d + 1} 
    (q - 1) (q - d)
    . 
\end{align*}
\todo{It seems that the maximum of the RHS of \eqref{eq:union-lower-bound} occurs when $\tau = n - k - g(n) \gtrsim \tau_{\GS}$.}

Observe that\footnote{See \href{https://math.stackexchange.com/questions/887960/truncated-alternating-binomial-sum}{here}.} 
\begin{align*}
    &A_{w} 
    = \binom{n}{w} 
    \bigg(q^{w - d + 1} \sum_{j = 0}^{w - d} \binom{w}{j} (-q)^{-j} - \sum_{j = 0}^{w - d} (-1)^{j} \binom{w}{j}\bigg) 
    \\ 
    &= \binom{n}{w} 
    \sum_{j = 0}^{w - d} \binom{w}{j} (-1)^{j} (q^{w - d + 1 - j} - 1) 
    \\ 
    &= \binom{n}{w} 
    \bigg((-1)^{w - d} \binom{w}{d - 1} \,_{2}F_{1}(1, 1 - d; w - d + 2; 1 / q) \\ 
    &\qquad \qquad + \frac{(q - 1)^{w}}{q^{d - 1}} 
    - (-1)^{w - d} \binom{w - 1}{d - 1}\bigg) \\ 
    &= \binom{n}{w} 
    \bigg(\frac{(q - 1)^{w}}{q^{d - 1}} + (-1)^{w - d} \binom{w - 1}{d - 1} \\ 
    & \qquad \qquad  \bigg(\frac{\,_{2}F_{1}(1, 1 - d; w - d + 2; 1 / q) w}{w - d + 1} - 1\bigg)\bigg)
    .
\end{align*}
Now, we want a $\tau$ such that 
\begin{align*}
    |B(c, \tau)| 
    = \Vol_{q}(\tau, n) 
    &\le q^{d}, \\ 
    \sum_{w = d}^{2 \tau} 
    A_{w} f(\tau, w) 
    &\ll q^{d}. 
\end{align*}
Note that \hide{By inspection, $\,_{2}F_{1}(1, 1 - d; w - d + 2; 1 / q) \lesssim 1$. Hence, 
\begin{align*} 
    A_{w} 
    &\le 
    \binom{n}{w} q^{w - d + 1} 
\end{align*}
and }
\begin{align*}
    f(\tau, w) 
    &\le q^{n - w + w H_{q}(\tau / w)}
\end{align*}
so that 
\begin{align*}
    &\sum_{w = d}^{2 \tau} A_{w} f(\tau, w) \\ 
    &\le \sum_{w = d}^{2 \tau} 
    q^{n - w + w H_{q}(\tau / w)} 
    \sum_{j = 0}^{w - d} 
    \binom{w}{j} (-1)^{j} (q^{w - d + 1 - j} - 1) 
    \\ 
    &\le \sum_{w = d}^{2 \tau} 
    \binom{n}{w}
    q^{k + w H_{q}(\tau / w)} \\ 
    &\approx 
    q^{k} \sum_{w = d}^{2 \tau} 
    \binom{n}{w} 
    \sum_{j = 0}^{\tau} 
    \binom{w}{j} 
    (q - 1)^{j}. 
\end{align*}
On the other hand, $|B(c, \tau)| = O(q^{n H_{q}(\tau / n)})$.
So, we want $\tau$ such that $n H_{q}(\tau / n) \le d$ and $3 \tau - d + 1 < d$, i.e. $H_{q}(\tau / n) \le d / n$ and $\tau < (2 d - 1) / 3$.
Therefore, it suffices to have 
\begin{align*}
    &n H_{q}(\tau / n) 
    \le d 
    . 
\end{align*}

\subsection{Theoretical Results}

\question{
(Idea) Take a random $y \in \fq^{n}$. What is the probability that $y$ is within distance $r$ of some $\GRS$ codeword? 
}

\answer{
Let this probability be $P_{r}$ and denote $[n] := \{1, \dots, n\}$. 
\hide{
Note that 
\begin{align*}
    \Pr\left[\bigwedge_{i \in I} \{y_{i} = f(\alpha_{i})\}\right] 
    &= \prod_{i \in I} \Pr[y_{i} = f(\alpha_{i})]. 
\end{align*}
}
Define 
\begin{align*}
    g_{f}(Y) 
    &= 
    \prod_{i > n - r} 
    (Y_{i} - f(\alpha_{i})). 
\end{align*}
Note that $g_{f}(Y) = 0$ if $Y_{i} = f(\alpha_{i})$ for some $i > n - r$. 
By the Schwartz--Zippel lemma, 
\begin{align*}
    P_{r} 
    \le \sum_{f \in \CF(k - 1, q)} \Pr[g_{f}(Y) = 0] 
    &\le \frac{\binom{n}{n - r}}{q^{n - k}}. 
\end{align*}
Now let $r = \tau_{\GS}$. 
}
}

\subsection{Quantization Using CP Codes}
\label{sec:quantization}

As mentioned in the introduction, CP codes have recently been used for the quantization problem in MISO systems with limited feedback~\cite{gooty2025precodingdesignlimitedfeedbackmiso}, 
and the quantization error was characterized using the covering radius of CP codes. 
Moreover, a decoding algorithm for CP codes over prime fields was recently proposed~\cite{riasat2024decodinganalogsubspacecodes}. 
However, as discussed in \autoref{sec:packing-covering} and at the beginning of \autoref{sec:covering}, decoding algorithms are not sufficient for the purpose of covering. Therefore, in future work, we wish to extend the above results to CP codes. 

\section{Numerical Results}
\label{sec:simulation}

\subsection{$\tau_{\GS}$ Versus $\tau_{\max}$}

\autoref{fig:coverage} shows that $\tau_{\GS} = \tau_{\max}$ for $(q, n, k) = (17, 14, 2)$, and that a significant fraction of the space is covered by Hamming spheres of radius close to $\tau_{\GS}$. 
\autoref{fig:comparison} shows a comparison of $\tau_{\max}$ with $\tau_{\GS}$ for $(q, n) = (47, 46)$ and $1 \le k < n$, 

\begin{figure}[!htbp]
    \centering
    \hide{
    \begin{subfigure}[t]{.38\textwidth}
        \includegraphics[width=\linewidth]
        {coverage_16_2.png}
        \caption{$q = 16, n = 14, k = 2$}
    \end{subfigure}
    ~
    \begin{subfigure}[t]{.38\textwidth}
        \includegraphics[width=\linewidth]
        {coverage_16_3.png}
        \caption{$q = 16, n = 14, k = 3$}
    \end{subfigure}
    }
    \includegraphics[width=.9\linewidth]{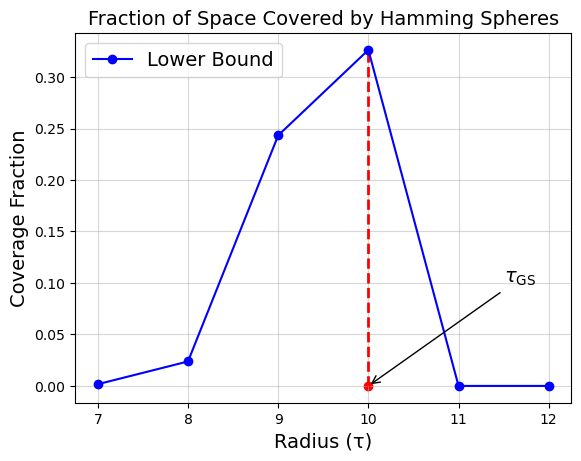}
    \caption{Lower bound on fraction of space covered by Hamming spheres of radius $\tau \in (d / 2, d)$ for $[14, 2, d]_{17}$ GRS code
    }
    \label{fig:coverage}
\end{figure}

\begin{figure}[!htbp]
    \centering
    \includegraphics[width=.95\linewidth]{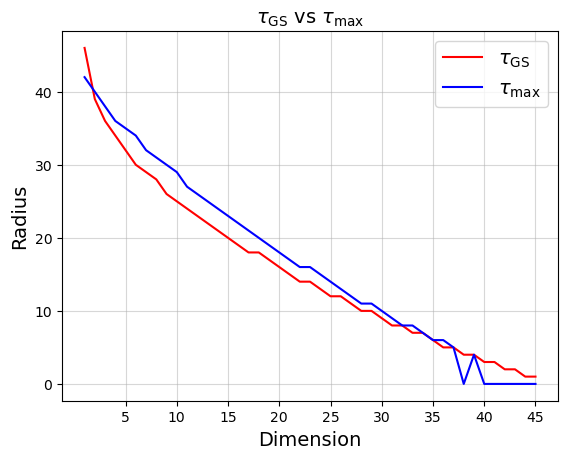}
    \caption{Comparison of $\tau_{\max}$ with $\tau_{\GS}$ for $q = 47, n = 46$ and $1 \le k < n$}
    \label{fig:comparison}
\end{figure}

\subsection{Average Number of Punctures}

\autoref{tab:punctures} lists the empirical average number of punctures for $\GRScover$ to succeed with $\GRSdecode$ a unique decoder (BW) and a list decoder (GS), 
for $(q, n) = (7, 6)$, $(11, 10)$ and $1 \le k < n$. For each $k$, the average was computed over $500$ simulations. 
It shows that \autoref{alg:GRS-cover} succeeds using BW usually within a few punctures, but most of the time without any punctures using GS, in accordance with \autoref{conj:punctures}. 
In other words, while Hamming spheres of radius $< d / 2$ do not cover the space well, those of radius $\tau_{\GS}$ cover a large fraction of the space. 

\begin{table}[!htbp]
    \caption{Average number of punctures before $\GRScover$ succeeds using unique (BW) and list (GS) decoders} 
    \begin{subtable}[t]{0.5\linewidth}
        \centering
        \begin{tabular}[t]{ccc}
            \toprule 
            $k$ & BW & GS \\
            \midrule 
            1 & 3.836 & 0 \\
            2 & 2.44 & 0.006 \\
            3 & 1.654 & 0.08 \\
            4 & 0.532 & 0.264 \\
            5 & 0.868 & 0 \\
            
            \bottomrule
    \end{tabular}
    \caption{$q = 7$, $n = 6$}
    \end{subtable}%
    \begin{subtable}[t]{0.5\linewidth}
        \centering
        \begin{tabular}[t]{ccc}
            \toprule 
            $k$ & BW & GS \\
            \midrule 
            1 & 8.224 & 0 \\
            2 & 6.872 & 0.452 \\
            3 & 5.654 & 0.366 \\
            4 & 4.34 & 0.378 \\
            5 & 3.022 & 0.7 \\
            6 & 1.864 & 0.744 \\ 
            7 & 1.428 & 0.036 \\
            8 & 0.376 & 0.16 \\
            9 & 0.922 & 0 \\
            \bottomrule
        \end{tabular}
        \caption{$q = 11$, $n = 10$}
        \end{subtable}
    \label{tab:punctures}
\end{table}

\subsection{Average Covering Radius}
\label{sec:average-covering-radius}

To estimate the average covering radius $\bar\rho(\CC)$ defined in \eqref{eq:rhobar}, we computed the average value of $\min_{c \in \CC} \dH(y, c)$\hide{covering radius $\bar\rho(\CC)$} over $500$ simulations with $\GRSdecode$ being a MAP decoder, a unique decoder (BW), and a list decoder (GS). The results are compared in \autoref{fig:covering-radius} for $q = 7, n = 6$ and $1 \le k < n$ \hide{and $q = 11, n = 10$, $1 \le k < n - 2$ }with \autoref{alg:trivial-cover} as baseline, as well as the actual covering radius $\rho(\CC)\hide{ = d - 1}$. 
It shows that the performance of \autoref{alg:GRS-cover} using BW already surpasses that of the baseline \hide{the trivial algorithm}\autoref{alg:trivial-cover}. This is then further improved when the GS list decoding is deployed. Furthermore, the performance of \autoref{alg:GRS-cover} with the GS list decoder almost matches that of the MAP decoder. Note that the MAP decoder provides the optimal covering solution with an exponential complexity.



\begin{figure}[!htbp]
    \hide{
    \begin{subfigure}{.5\textwidth}
    }
    \centering
    \includegraphics[width=\linewidth]{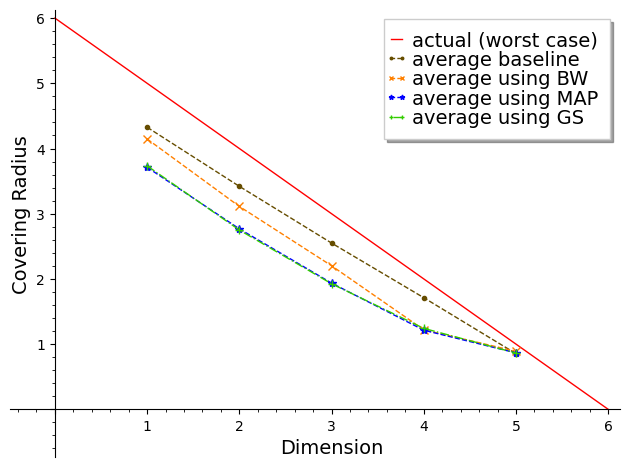}
    \hide{
    \caption{$q = 7$, $n = 6$}
    \end{subfigure}
    ~
    \begin{subfigure}{.5\textwidth}
    \centering
    \includegraphics[width=\textwidth]{11-10-map.png}
    \caption{$q = 11$, $n = 10$}
    \end{subfigure}
    }
    \caption{Comparison of simulated average covering radius of a $[6, k]_{7}$ GRS code for $1 \le k < 6$ using different algorithms}
    \label{fig:covering-radius}
\end{figure}

\section{Conclusion and Future Directions}
\label{sec:conclusion}

In this work, we presented a new algorithm to efficiently cover the ambient space using GRS codes. 
\autoref{conj:punctures} naturally stems from the algorithm and is strongly supported by numerical results, and it would be ideal to be able to prove it analytically. 
As discussed in \autoref{sec:quantization}, our goal 
is to extend the results of this paper to CP codes, and more generally to Grassmann codes. 
Since CP codes are constructed from particular subcodes of GRS codes, their covering radii are directly connected. 
However, adapting our algorithm to CP codes will likely result in an increase in the number of punctures, and we wish to explore this in more detail in a future work. 


\bibliographystyle{IEEEtran}
\bibliography{IEEEabrv, references}

\end{document}